\pgfplotsset{compat=1.7}
\newtheorem{proposition}{Proposition}
\newtheorem{definition}{Definition}
\newtheorem{example}{Example}
\theoremstyle{nonumberplain}
\newtheorem{proof}{Proof}
\begin{document}
\title{Industrial Symbiotic Relations as Cooperative Games}
\author{\IEEEauthorblockN{Vahid Yazdanpanah and Devrim Murat Yazan}
\IEEEauthorblockA{University of Twente, Drienerlolaan 5, 7522 NB, Enschede, The Netherlands.\\ Email: \{v.yazdanpanah,d.m.yazan\}@utwente.nl}}
\IEEEspecialpapernotice{\normalsize (Presented at the $7^{th}$ IESM Conference, October 11--13, 2017, Saarbr\"{u}cken, Germany)}
\maketitle

\begin{abstract} In this paper, we introduce a game-theoretical formulation for a specific form of collaborative industrial relations called ``Industrial Symbiotic Relation (ISR) games" and provide a formal framework to model, verify, and support collaboration decisions in this new class of two-person operational games. ISR games are formalized as cooperative cost-allocation games with the aim to allocate the total ISR-related operational cost to involved industrial firms in a fair and stable manner by taking into account their contribution to the total traditional ISR-related cost. We tailor two types of allocation mechanisms using which  firms can implement cost allocations that result in a collaboration that satisfies the  fairness and stability properties. Moreover, while industries receive   a particular ISR proposal, our introduced methodology is applicable as a managerial decision support to systematically verify the quality of the ISR in question. This is achievable by analyzing if the implemented allocation mechanism is a stable/fair allocation.\end{abstract}

\section{Introduction}

The multi-dimensional concept of  \emph{Industrial Symbiosis} focuses on analysis, design, and operation of collaborative relations between traditionally disjoint industrial enterprises with the aim of keeping reusable resources, e.g., recyclable material or waste energy, in their (loosely connected) value chains \cite{chertow2000industrial,lombardi2012redefining,yazan2016design}. As \emph{Industrial Symbiotic Relations} (ISRs) aim at the lowest possible discharge of resources, they can be considered as a tool for implementing the concept of \emph{circular economy} \cite{pearce1990economics} in the context of industrial relations. Moreover, ISRs are closely related to  \emph{Industry 4.0} paradigm and practice of  \emph{Collaborative Networked Organizations} as they all are concerned about the necessity for interrelation between traditionally disconnected industrial firms \cite{stock2016opportunities,camarinha2009collaborative}. Reviewing industrial symbiosis literature, we encounter recent contributions focused on different aspects of this concept. In  \cite{yazan2016design}, they present the concept of \emph{perfect industrial symbiosis} and verify the quality of any given ISR by measuring its distance to such a perfect form. Introduced method in \cite{fraccascia2017technical} focuses on efficiency measuring while \cite{jackson1998s} studies  dynamics of profits in  ISRs. Despite contributions that discuss static (multi-criteria) decision analysis in industrial symbiosis (see \cite{yu2015makes}), one aspect of ISRs that we believe requires more attention is \emph{dynamic decision analysis}. In our view, while we shift from ISR in theory to ISR in practice, two missed elements  are 1) applicable decision analysis methods and 2) practical decision support tools that are aware of dynamic operational aspects of ISRs, e.g., methods for analyzing and mechanisms for designing fair and stable collaborations. This asks for formal frameworks  tailored to model, verify, and support such decisions (i.e., decision process modeling, decisions verification methods, and decision support tools). In a general view, decisions in ISRs can be categorized in two classes, \emph{selection decisions} and \emph{collaboration decisions}. The former is about  choosing  among  firms and learning about potential ISRs (exploration) while the latter is  about getting engaged in (or rejecting) a particular ISR proposal (exploitation). To deal with these two operational decision problems, the mature field of of cooperative game theory \cite{driessen2013cooperative} and more specifically subfield of Operations Research (OR) games \cite{borm2001operations} provide vigorous analytical methods and design mechanisms. 

In this work, we aim to fill the gap by tailoring analytical tools based on game-theoretical solution concepts to support the second form of decisions, i.e., collaboration decisions, in ISRs. For this purpose, we represent  ISRs as market games and model them as cooperative cost-allocation games (see \cite{shapley1955markets,young1985cost}). Accordingly, the focus is on operational aspects of ISRs, analysis of collaboration decisions in ISRs, and tailoring cost-allocation mechanisms that respect the operation of ISRs. Note that in this work we analyze collaboration decisions in bilateral industrial symbiotic relations as the nuclear building blocks for various industrial symbiotic topologies, e.g., Industrial Symbiotic Networks (ISNs) \cite{yazdanpanah2016}. 

The paper is structured as follows. In Section \ref{sec:analysis} we provide a conceptual analysis of ISRs from an operational point of view. Section \ref{sec:prelim} presents the game-theoretical preliminaries and our proposed class of ISR games. In Sections \ref{sec:mechanism}  we introduce the two tailored solution concepts for allocation of costs in ISRs. Using these notions, firms can reason about stability and fairness of any given ISR. Finally, concluding remarks and future work directions  are  presented in Section \ref{sec:conc}.

\section{Conceptual Analysis} \label{sec:analysis}

To discuss the intuition behind our proposed operational semantics for the game-theoretical formulation of Industrial Symbiotic Relations (ISRs) and to elaborate nuances of collaboration decisions in ISRs, we present the following running example. Imagine a  glass manufacturer firm $A$ and a ceramics manufacturer  firm $B$. Firm  $A$ produces glass powder as its excess resource $r$ that (after recycling) can be substituted with $i$, the primary input of $B$ for its production processes (Figure \ref{fig:auto01}). 

\begin{figure}[!htb]
\centering   
\begin{tikzpicture}[shorten >=1pt,node distance=3.25cm,on grid,auto] 
    \node [state, fill=white, drop shadow] (A)  {A};
    \node [state, fill=white, drop shadow] (B) [right of=A] {B}; 
    \node (s) [right =2.0cm of B,fill=white,text width=1.25cm, drop shadow]{\footnotesize Traditional $i$-Supplier};
    \node (d) [left =2.0cm of A,fill=white,text width=1.45cm, drop shadow]{\footnotesize $r$-Discharge Area};

  \path[->] 
  (A)  
    edge [pos=0.5, sloped, above] node {$r$ {\footnotesize (to substitute $i$) } }  (B)
    edge [dashed, pos=0.5, sloped, above] node {$r$} (d)
  (s)
  edge [dashed, pos=0.5, sloped, above] node {$i$} (B)

  ;
  
\end{tikzpicture} \caption{Schematic industrial symbiotic relation  between (glass manufacturer) firm $A$   and (ceramics manufacturer) firm $B$ on resource $r$ (glass powder).} \label{fig:auto01} \end{figure}
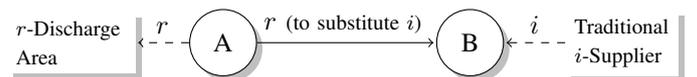

\subsection{Collaboration Decision and Cost-Allocation} \label{sec:2A}

In our ISR example, having the potential to form a symbiotic relation over $r$ may enable both $A$ and $B$ to reduce the costs affiliated with discharging excess $r$ ($7$ units of util\footnote{A \emph{util} can be  any sort of transferable utility, e.g., say that each \emph{util} is one thousand Euros.}) and purchasing $i$ ($11$ units of util), respectively.  However, establishing such a collaboration has its corresponding operational costs, e.g., costs of transportation and recycling ($15$ units of util).  Intuitively, such an ISR would be feasible if total traditional costs in case of  non-cooperation ($7+11$ utils) be more than total operational costs in case of  cooperation ($15$ utils). Such a feasibility condition  is necessary but not sufficient for firms to collaborate. One main question is about the method(s) to allocate the operational costs of collaboration such that the contribution of each party will be respected. Should both firms $A$ and $B$ equally pay ($7.5$ utils) or should firm $B$ pay more as it is enjoying more cost reduction thanks to the collaboration? Although the collaboration may reduce the total cost, how to allocate the total cost of ISR operationalization  will be at stake. This relates the practice of ISRs to the concept of \emph{coopetition} \cite{bengtsson2000coopetition} in which the players are interested in cooperation but also compete to gain as much as possible out of the benefits created by such a cooperation. Accordingly, studying if an ISR is a stable and fair relation calls for methods to analyze and mechanisms to guarantee its  performance in the operationalization stage.

\subsection{Dynamics of ISR-Related Costs}

Following \cite{albino2016exploring}, we deem that economic profitability can be seen as the main driver for industries to get involved in a potential ISR (\emph{collaboration decision}).  In brief, for firms with the potential to establish an ISR, it is reasonable to compare the total cost of ISR-operationalization with their current ISR-related costs, e.g., excess-resource discharge costs. Note that we circumscribe our focus on costs that can be reduced or costs that will be introduced, in case of ISR materialization and not the costs that remain uninfluenced, e.g., costs related to processes that are independent of the ISR in question. In the following, we provide a brief cost analysis for firms on both sides of a potential ISR. This results in two classes of ISR-related \emph{operational} and \emph{traditional} costs. The former refers to costs involved in the process of ISR operationalization while the latter is about traditional costs that firms should take into account if they do not implement the ISR. 

In our example, firm $B$ should traditionally purchase $i$ from its $i$-supplier(s) and firm $A$ has to pay the cost of discharging $r$. We call the former, \emph{traditional purchasing cost} and the latter, \emph{traditional discharge cost}. On the other hand, the three main ISR-related operational  costs  are \emph{Treatment},  \emph{Transportation},  and \emph{Transaction} costs \cite{esty1998industrial,sinding2000environmental}.  \emph{Treatment Cost}: When a resource based on which an ISR can be established, e.g., glass powder,  is out of a production process, it needs to be treated. Depending on the  resource type, treatment processes may include sorting, dismantling, liquefaction, etc.  \cite{magram2011worldwide,lovelady2009design}. Based on the set of treatment processes required to make the resource usable for the resource-receiver side of an ISR, the implementation of waste treatment facility may change. In general, the treatment process results in  a total treatment cost  for any particular ISR. \emph{Transportation Cost}:  Resource transportation can be done via land vehicles, sea freights, or even combined transportation modes (see \cite{zijm2015logistics}) with respect to the resource type and geographical boundaries. Moreover, potential partners may decide to invest in implementation of new infrastructures, e.g., a pipeline system, and paying the investment cost for this. In this work, we abstract from such subtleties in decision-making for the mode of transportation  and assume a standard total transportation cost for a given ISR. \emph{Transaction Cost}: The role of transaction costs  in establishment of ISRs is studied  in the industrial symbiosis literature, e.g., in \cite{carpenter2008use,chen2012impact}. According to \cite{dahlman1979problem,williamson1981economics}, transaction costs include the costs of market research, contracting negotiations, coordination, and adapting to the use of non-traditional resources, e.g., wastes. As discussed, the former two operational  costs are very much dependent on the resource type while the transaction cost merely depends on the administrative aspects of the ISR. As we are focused on industrial symbiotic relations (and not networks), we take into account a single value for transaction cost for a given ISR. 

In this work, we consider the two classes of ISR-related operational and traditional costs  as the main quantitative parameters for our game-theoretical formalization of ISRs. This is to tailor mechanisms for allocation of \emph{ISR-related operational costs} based on the contribution of the involved firms to \emph{ISR-related traditional costs}. For notational convenience, while discussing about a given  industrial symbiotic relation $\sigma$ between two arbitrary firms $A$ and $B$,  we denote  the total $\sigma$-related traditional costs for the firm $i\in \{A,B\}$ with $T_i(\bar{\sigma})$   and refer to total $\sigma$-related operational costs as $T(\sigma)$. So, the aim is to  allocate  $T(\sigma)$ to both $A$ and $B$ in  a stable and fair manner by taking into account $T_A(\bar{\sigma})$ and $T_B(\bar{\sigma})$. We later discuss about and distinguish between stability and fairness of cost-allocations in ISRs.

\subsection{Game-Theoretical Cost-Allocation Mechanisms} \label{sec:GTanalysis}

As we discussed in Section  \ref{sec:2A}, allocation mechanisms can play a key role in the process of ISR operationalization since a fair allocation of operational costs can foster  the collaboration.  For developing such practical allocation mechanisms, cooperative game theory \cite{driessen2013cooperative} and Operation Research (OR) games \cite{borm2001operations} provide theoretical solution concepts to allocate costs to involved players in market games. Notions such as \emph{core of the game}  and \emph{Shapley value} guarantee desired properties such that it is reasonable for firms not to deviate from cooperation \cite{driessen2013cooperative}.  In the following, we briefly analyze  properties of these two types of game-theoretical solution concepts as two notions that we aim to tailor for allocation of the total  operational cost in ISRs. 

We first discuss the concept of \emph{core of the game} as the set of all cost-allocations that (1) allocate a cost to each player lower than their traditional cost and (2) guarantee that the total allocated cost is equal to the  total operational cost. In our ISR example, the total ISR-related operational cost is $15$ utils while firms $A$  and $B$ had to traditionally pay $7$ and $11$ utils, respectively. In this case, cooperation  results in total cost reduction by $3$ utils. However, cooperating will be rational for each player, only if they individually pay less than what they had to pay traditionally. When a cost-allocation mechanism satisfies this, it  regards the so called \emph{individual rationality (INR)} property \cite{young1985cost}. On the other hand, the summation of allocated costs to players should be equal to the total operational cost. Mechanisms that satisfy this property, are called \emph{efficient (EFF)} cost-allocations. The set of cost allocations that satisfy both $INR$ and $EFF$ form the \emph{core of the game} \cite{driessen2013cooperative}. The second game-theoretic notion for allocation of costs, is the concept of \emph{Shapley value} \cite{shapley1953value} as the unique efficient (EFF) mechanism for allocation of a cost among players such that (1) symmetric players pay the same costs, (2) players that their presence in the cooperation results in no cost reduction (referred as dummy players) pay their traditional costs, and (3) if players get involved in another game, the total allocated costs to each player in these two games, can be simply added. These three properties respectively referred to  \emph{symmetry (SYM)}, \emph{dummy/null  player (DUM)}, and \emph{additivity (ADD)}  properties  in the cooperative game theory literature \cite{driessen2013cooperative}. Note that Shapley value is the unique mechanism that allocates a cost value to each player such that all $EFF$, $SYM$, $DUM$, and $ADD$ are satisfied. 

\section{Industrial Symbiotic Relations as Games} \label{sec:prelim}

Dynamics of costs and cost-saving values that result from collaboration in market games are often modeled by cooperative games with \emph{Transferable Utility} (TU) games  \cite{driessen2013cooperative,shapley1955markets}. Such games specify all the possible collaborative agent groups and represent the corresponding cost values. This formal representation  enables reasoning about cost-saving as a quantitative outcome of cooperation among agents.

\begin{definition}[Cooperative TU Games]\cite{driessen2013cooperative} A cooperative  cost-allocation game with transferable utility (a TU game) is a tuple $(N,c)$ where $N=\{a_1,a_2,\dots, a_n\}$ is the finite set of agents and $c: 2^N \mapsto \mathbb{R}_{\geq 0} $ is a characteristic cost function that associates a real number $c(S)$ with each subset $S \subseteq N$. By convention, we always assume that $c(\emptyset)=0$.\end{definition}

In the following definition, we recall two properties that axiomatize the  behavior of the cost function in response to structural relations between agent groups. 

\begin{definition} [Subadditive and Submodular Games]~\cite{driessen2013cooperative} \label{def:subs}  Let $G=(N,c)$ be a cost-allocation TU game.  We say $G$ is \emph{subadditive} iff  $c(S)+c(T) \geq c(S\cup T)$ for all disjoint agent groups $S$ and $T$ in $N$  (i.e., $S,T \subseteq N$ and $S \cap T = \emptyset$). Moreover, we say $G$ is \emph{submodular} iff $c(S)+c(T) \geq c(S\cup T) + c(S\cap T)$ for all agent groups $S$ and $T$ in $N$ (i.e., $S,T \subseteq N$). \end{definition}

Roughly speaking, in \emph{subadditive} games, agents have rational incentives to cooperate because the total cost will be higher in case of no-cooperation. In most applications, cost-allocation games are usually subadditive. Desirable properties of \emph{submodular} games (also referred as \emph{concave} games) will be elaborated while we focus on cost allocation mechanisms for TU games (in Section \ref{sec:mechanism}). 

With respect to our scope of application, i.e., bilateral symbiotic relations between industrial agents, we focus on two-person TU games  and formalize our Industrial Symbiotic Relation (ISR) games as such. Moreover, following our presented analysis in Section \ref{sec:2A} about the feasibility of ISRs (in case they result in the reduction of the total cost), we assume subadditivity as it corresponds to the nature of our application context. 

\begin{definition}[ISR Games] \label{def:ISR-Games} Let $\sigma$ be an ISR between firms $A$ and $B$. Moreover, let $T(\sigma)$ and $T_i(\bar{\sigma})$ respectively represent the total $\sigma$-related operational cost and the total $\sigma$-related traditional costs for $i \in \{A,B\}$. We say $\sigma$-based ISR cost allocation game (ISR game $\sigma$) between firms $A$ and $B$ is a subadditive cooperative TU-game $(N,c)$ where $N=\{A,B\}$, $c(N)=T(\sigma)$, $c(\emptyset)=0$, and $c(\{i\})=T_i(\bar{\sigma})$ for $i \in N$. \end{definition}

According to Definition \ref{def:ISR-Games}, the cost function of ISR games characterizes industrial symbiotic relations by associating to each singleton group $\{i\}$\footnote{In further references, whenever it is clear from the context that we are referring  to  a singleton group $\{i\} \subset N$, we use $i$ instead of $\{i\}$.}, the total cost that they will face (individually) in case of no-cooperation. Moreover, it ascribes the total ISR-related operational cost to the two member group $N$ as the amount that members of $N$ have to pay (collectively) in case of cooperation. In addition,  the assumed subadditivity of ISR games reflects the feasibility of ISRs, i.e., in ISR games we have that $T(\sigma) \leq \sum\limits_{i \in N} T_i(\bar{\sigma})$. So, regardless of the mechanisms for the allocation of ISR-related operational costs, the total amount to be paid in case of cooperation is at most equal to the sum of the amounts to be paid individually. The following property shows that in general, ISR games are submodular regardless of their particular settings. We later discuss that such a property results in applicability of a large class of game-theoretical cost allocation mechanisms, i.e., mechanisms that are based on the concept of \emph{core of the game}. 

\begin{proposition}[Submodularity of ISR Games] \label{prop:sub2sub} Let $\sigma$ be an arbitrary ISR game. It always holds that $\sigma$ is submodular.\end{proposition}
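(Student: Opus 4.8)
The plan is to verify the submodularity inequality $c(S)+c(T) \geq c(S\cup T) + c(S\cap T)$ directly, by exhausting the (very few) pairs of subsets of the two-element set $N=\{A,B\}$. Since $2^N=\{\emptyset,\{A\},\{B\},N\}$, every pair $S,T\subseteq N$ either consists of two comparable sets (one contained in the other) or is the single incomparable pair $\{A\}$, $\{B\}$.

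First I would dispose of the comparable case: if $S\subseteq T$ (the case $T\subseteq S$ is symmetric), then $S\cup T = T$ and $S\cap T = S$, so the inequality reads $c(S)+c(T)\geq c(T)+c(S)$, which holds trivially with equality. This already covers all pairs involving $\emptyset$ or $N$, as well as the pairs $(\{A\},\{A\})$ and $(\{B\},\{B\})$.

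It then remains to treat $S=\{A\}$, $T=\{B\}$, for which $S\cup T = N$ and $S\cap T=\emptyset$. Here the submodularity inequality specializes to $c(\{A\})+c(\{B\}) \geq c(N) + c(\emptyset) = c(N)$, i.e.\ $T_A(\bar{\sigma})+T_B(\bar{\sigma})\geq T(\sigma)$. But this is exactly the subadditivity built into the definition of ISR games (equivalently, the feasibility condition $T(\sigma)\leq \sum_{i\in N}T_i(\bar{\sigma})$ noted after Definition \ref{def:ISR-Games}), so the inequality holds and the proof is complete.

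I do not expect any genuine obstacle here; the only conceptual point — and the reason the statement is worth recording — is that for two-person games there is a unique pair of incomparable coalitions, so the submodularity requirement collapses onto the single subadditivity inequality already assumed. Consequently every ISR game is automatically submodular, which is precisely what opens the door to the core-based allocation mechanisms discussed in Section \ref{sec:mechanism}.
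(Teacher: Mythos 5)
Your proof is correct and follows essentially the same route as the paper's: an exhaustive check over pairs of coalitions in the two-element player set, observing that every case except $S=\{A\}$, $T=\{B\}$ holds trivially (with equality) and that the remaining case is exactly the subadditivity assumed in the definition of ISR games. Your grouping of all comparable pairs into one case is a slightly cleaner organization than the paper's enumeration, but the substance is identical.
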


\begin{proof} According to Definition \ref{def:subs}, a game is submodular iff the  $c(S)+c(T) \geq c(S\cup T) + c(S\cap T)$  inequality holds for all possible agent groups $S$ and $T$ in $N$. In ISR games, by checking the validity of this inequality for all $6$ possible combinations of agent groups (for $S$ and $T$), the claim will be proved. For $S=\emptyset$, we have the following valid inequality $c(\emptyset)+c(T) \geq c(\emptyset \cup T = T) + c(\emptyset)$. For $S=N$, the inequality can be reformulated in the following form that always holds $c(N)+c(T) \geq c(N\cup T=N) + c(N\cap T=T)$. Finally, when $S$  and $T$ are equal to the only two possible  disjoint  groups, we have the following inequality $c(S)+c(T) \geq c(N) + c(\emptyset)$. This inequality always holds thanks to the subadditivity of ISR games. \end{proof}

Note that \emph{submodularity} is not a general property of  \emph{subadditive} cooperative games but holds for the class of ISR games. In the following, we recall our ISR scenario between the glass manufacturer firm $A$ and the ceramics manufacturer firm $B$, and describe the game-theoretical  formulation of this scenario.

\begin{example}[ISR Scenario as a Game] \label{ex:001} In the ISR scenario from section \ref{sec:analysis}, we assume that the amount of recycled excess $r$ in $A$ completely substitutes the required amount of $i$ in $B$. Hence, in case the firms operationalize this symbiotic relation, neither of the firms has to deal with associated traditional costs for discharging excess $r$ and purchasing required $i$. This ISR scenario can be modeled as cooperative game $\sigma=(N,c)$ where $N=\{A,B\}$, $c(A)=7$, $c(B)=11$, $c(\emptyset)=0$, and $c(N)=15$. This game is both subadditive and submodular. To check subadditivity, we survey all possible couples of agent groups in $N$. The only two disjoint agent groups are $\{A\}$ and $\{B\}$ for which the cost of the union group (c($\{A,B\})=c(N)=15$) is lower than the summation of the individual costs ($c(\{A\})+c(\{B\})=18$). Thus, the game $\sigma$ is subadditive. For submodularity, we can rely on Proposition  \ref{prop:sub2sub}. \end{example}

\section{Allocation Mechanisms and Decision Support} \label{sec:mechanism}

Having Industrial Symbiotic Relations (ISRs) modeled as cooperative games (ISR games), one decision that firms are faced with is either ``to get engaged in'' or ``to reject'' a potential ISR. This is mainly to determine  the  \emph{collaboration decision} (as discussed in  Section \ref{sec:2A}). Following rational decision-making perspectives presented in \cite[page.~210-211]{sandholm1999distributed}, we deem that to support the collaboration decision in ISRs, analyzing two key aspects of collaborations, namely  \emph{stability} and \emph{fairness}, results in practical decision support tools for ISRs. In other words, ``goodness'' of a collaboration can be characterized and validated by checking if it is \emph{stable}, \emph{fair}, or both. In the following, we introduce  two methods from game-theoretical literature that axiomatize the \emph{stability} and \emph{fairness} properties. These methods formulate both the stability and fairness of collaborations with respect to distribution of the value that agents can gain thanks to the collaboration. Accordingly, tailoring these mechanisms for ISR games leads to tools for supporting the \emph{collaboration decision} in industrial symbiotic relations.  In this case, the way that industrial agents allocate the total ISR-related operational cost specifies the stability and fairness of the ISR. 

\subsection{Core Allocations for ISR Games}

Core-based mechanisms in cooperative games are mainly concerned about \emph{stability} of possible collaborations \cite{driessen2013cooperative}. In relation to cost allocation in games, a collaboration is stable iff (1) the summation of allocated costs to individual agents in  a collaborative group is equal to the total cost that the group should pay (efficiency) and (2) the allocated costs to individuals is at most equal to their costs in case they defect from the collaborative group (rationality). It is observable that if the agent groups follow a cost allocation method that does not satisfy the two above properties, they end in an \emph{unstable} situation either due to inefficient distribution of costs or as the result of (rational) agents leaving the group. In the following, we define our  core-based cost allocation mechanism for ISR games and describe its  properties.

\begin{definition} [Core Allocations for ISRs] \label{def:core} Let $\sigma$ be an ISR game (as defined in Definition \ref{def:ISR-Games}) between firms $A$ and $B$. The core of $\sigma$ is the set $ \Psi (\sigma) := \{  \langle  T_A^\Psi(\sigma), T_B^\Psi(\sigma)   \rangle \}$ such that for $i \in \{A,B\}    $ we have that (1)  $T_i^\Psi(\sigma) \in \mathbb{R}_{\geq 0}$ (non-negative valued), (2) $\sum\limits_{i \in \{A,B\}} T_i^\Psi(\sigma) = T(\sigma)$ (efficient), and (3) $T_i^\Psi(\sigma) \leq T_i(\bar{\sigma})$ (individually rational). \end{definition}

Following the discussion about stability of collaborations, an ISR $\sigma$ is stable with respect to the allocation of its operationalization costs, if it implements a cost allocation that belongs to the core of $\sigma$. Thus, the presented core allocation for ISRs can be applied as (1) a mechanism for guaranteeing the stability of an ISR (ISR design) and (2) as a verification method to analyze if an ISR is stable (ISR assessment). The set of core allocations for a given ISR, construct a segment representable in two-dimensional space (see Figure \ref{fig:M1}). Due to the linear formulation  of ISRs' core, computing the set of stable allocations is not computationally expensive and can be computed applying linear programming techniques to solve a  two-variable system of inequalities \cite{dantzig2016linear}.

\begin{figure}[tb]
\centering
\begin{tikzpicture}[
    scale=2.5,
    axis/.style={thick, ->, >=stealth},
    important line/.style={thick},
    dashed line/.style={dashed, thin},
    pile/.style={line width=.8mm},
    every node/.style={color=black}
    ]
    \draw[axis] (0,0)  -- (1.8,0) node(xline)[right]
        {$T_A(\sigma)$};
    \draw[axis] (0,0) -- (0,1.8) node(yline)[above] {$T_B(\sigma)$};
    \draw[important line] (0,1.5) coordinate (A) node[left, text width=2em] {$T(\sigma)$} -- (1.5,0)         coordinate (B) node[below, text width=2em] {$T(\sigma)$};
    \draw[pile] (.4,1.1) coordinate (C) node [above,  yshift=-0.05cm, xshift=0.05cm] {$\alpha$} -- (.55,.95)
        coordinate (D) node [left, yshift=-0.1cm, xshift=0.05cm] {$\gamma$};

    \draw[pile] (.55,.95) coordinate (V) node [above] {} -- (.7,.8)
        coordinate (Y) node [right, yshift=0.1cm, xshift=-0.05cm] {$\beta$};
\fill (D) circle[radius=1pt];

\draw[dashed line] (.4,1.1) coordinate (C1) -- (0,1.1)
        coordinate (D1) node [left, text width=2.5em] {$T_B(\bar{\sigma})$};      
        
\draw[dashed line] (.4,1.1) coordinate (C2) -- (.4,0)
        coordinate (D2) node [below, text width=4em] {$U_A(\sigma)$};
                
\draw[dashed line] (.7,.8) coordinate (C3) -- (.7,0)
        coordinate (D3) node [below, text width=2em] {$T_A(\bar{\sigma})$};
        
\draw[dashed line] (.7,.8) coordinate (C4) -- (0,.8)
        coordinate (C4) node [left, text width=2.5em] {$U_B(\sigma)$};
        
\draw[decorate,decoration={brace,amplitude=5pt,raise=0.3pt},yshift=0pt] (C) -- (Y) node [midway,yshift=10pt, xshift =45pt]{$\{\langle  T_A^\Psi(\sigma), T_B^\Psi(\sigma)   \rangle \}$};
   
\end{tikzpicture}
\caption{Schematic core and Shapley allocations for ISR game $\sigma$: In this diagram, allocated costs to firms $A$ and $B$ are illustrated on the horizontal and vertical axes, respectively. Moreover, $U_A(\sigma)=T(\sigma)-T_B(\bar{\sigma})$, $U_B(\sigma)=T(\sigma)-T_A(\bar{\sigma})$, and $\gamma$ represents the Shapley allocation of the game $\Phi (\sigma)$.} \label{fig:M1}
\end{figure}
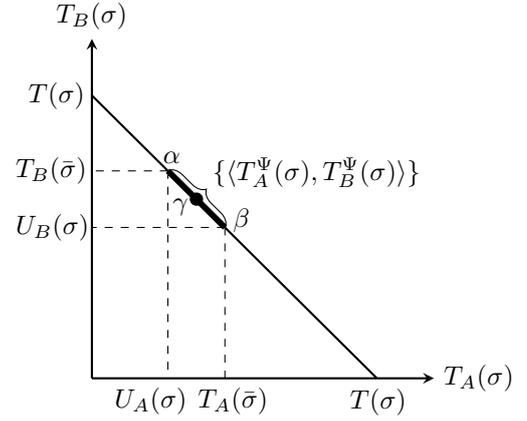

Two main concerns with respect to  applicability of the concept of \emph{core} are  about its \emph{existence} and \emph{fairness}. In other words, is the set of core allocations a nonempty one and does it always provide a fair distribution of costs among involved firms in an ISR game?  In the following, we first illustrate  that ISR games have a nonempty core and  core allocations cannot guarantee the fairness property. This motivates the introduction of a fair cost allocation mechanism, i.e., a Shapley-based allocation mechanism for ISR games. 

\begin{proposition}[Existence Property of Core of ISR Games] \label{prop:core_nonempty}
Let $\Psi (\sigma)$ be the core of ISR game $\sigma$ between firms $A$ and $B$. It always holds that $\Psi (\sigma) \neq \emptyset$.
\end{proposition}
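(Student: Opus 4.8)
The plan is to prove nonemptiness constructively by exhibiting a single cost allocation that meets all three requirements of Definition~\ref{def:core}. First I would recall the only two facts that are needed: by the subadditivity assumed in Definition~\ref{def:ISR-Games} we have $T(\sigma) \leq T_A(\bar{\sigma}) + T_B(\bar{\sigma})$, and all of $T(\sigma), T_A(\bar{\sigma}), T_B(\bar{\sigma})$ are non-negative because $c$ maps into $\mathbb{R}_{\geq 0}$.

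The natural candidate is the allocation that splits the operational cost $T(\sigma)$ in proportion to the traditional costs: set $T_i^\Psi(\sigma) := \frac{T_i(\bar{\sigma})}{T_A(\bar{\sigma}) + T_B(\bar{\sigma})}\, T(\sigma)$ for $i \in \{A,B\}$, whenever $T_A(\bar{\sigma}) + T_B(\bar{\sigma}) > 0$. Then I would verify: (1) non-negativity is immediate since every term is non-negative; (2) efficiency holds because the two coefficients sum to one, so $T_A^\Psi(\sigma) + T_B^\Psi(\sigma) = T(\sigma)$; (3) individual rationality $T_i^\Psi(\sigma) \leq T_i(\bar{\sigma})$ reduces to $T(\sigma) / (T_A(\bar{\sigma}) + T_B(\bar{\sigma})) \leq 1$, which is exactly the subadditivity inequality. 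Hence $\langle T_A^\Psi(\sigma), T_B^\Psi(\sigma)\rangle \in \Psi(\sigma)$.

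The only subtlety — and the step I would treat with the most care — is the degenerate case $T_A(\bar{\sigma}) + T_B(\bar{\sigma}) = 0$, where the displayed formula is undefined. But then non-negativity forces $T_A(\bar{\sigma}) = T_B(\bar{\sigma}) = 0$, and subadditivity together with $T(\sigma) \geq 0$ forces $T(\sigma) = 0$, so the trivial allocation $\langle 0, 0 \rangle$ satisfies all three conditions. Alternatively one could avoid the case split by invoking Proposition~\ref{prop:sub2sub} together with the classical fact that every submodular (concave) cooperative game has a nonempty core, but the direct construction is elementary and self-contained for the two-person setting; it also makes transparent the picture in Figure~\ref{fig:M1}, where the core is the nonempty sub-segment of the efficiency line cut out by the two individual-rationality constraints, with endpoints $\max\{0, T(\sigma) - T_B(\bar{\sigma})\}$ and $\min\{T(\sigma), T_A(\bar{\sigma})\}$ on the $T_A$-axis.
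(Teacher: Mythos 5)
Your proof is correct, but it takes a genuinely different route from the paper. The paper's own argument is non-constructive: it invokes Proposition~\ref{prop:sub2sub} (submodularity of ISR games) together with the classical result that submodular (concave) cost games have a nonempty core, and is done in one line. You instead exhibit an explicit core element, the allocation proportional to traditional costs $T_i^\Psi(\sigma) = \frac{T_i(\bar{\sigma})}{T_A(\bar{\sigma}) + T_B(\bar{\sigma})}\, T(\sigma)$, and verify the three defining conditions directly, with individual rationality reducing exactly to the subadditivity inequality $T(\sigma) \leq T_A(\bar{\sigma}) + T_B(\bar{\sigma})$ assumed in Definition~\ref{def:ISR-Games}; your treatment of the degenerate case $T_A(\bar{\sigma}) + T_B(\bar{\sigma}) = 0$ is careful and correct. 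What each approach buys: the paper's argument is shorter and would generalize immediately to $n$-person submodular games, but it leans on imported machinery (and on a theorem the paper cites somewhat loosely -- the nonemptiness of the core of concave games is really Shapley's convex-games result rather than the Bondareva--Shapley balancedness characterization). Your construction is elementary, self-contained, uses only subadditivity rather than the stronger submodularity, and produces a concrete, interpretable stable allocation (cost shares proportional to traditional costs), which is arguably more useful in the decision-support setting of the paper; your closing description of the core as the segment between $\max\{0, T(\sigma) - T_B(\bar{\sigma})\}$ and $\min\{T(\sigma), T_A(\bar{\sigma})\}$ on the $T_A$-axis matches Figure~\ref{fig:M1} and Example~\ref{ex:002}.
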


\begin{proof} According to the \emph{Bondareva-–Shapley} theorem (as described in \cite{driessen2013cooperative}), submodular games have a nonempty core. We already proved in Proposition  \ref{prop:sub2sub} that ISR games are submodular. Hence, we have that for any ISR game $\sigma$, the core is not empty.\end{proof}

Note that the concept of \emph{core of  the game}, provides a set of allocations that guarantee the stability of the collaboration. Having a range of stable cost distributions is  appropriate for ISR scenarios in which firms are allowed to practice their bargaining power (see \cite{coff1999competitive}) in the negotiation process in order to pay the smallest possible share of the ISR-related operational costs. E.g., as illustrated in Figure \ref{fig:M1}, the most desirable (albeit stable) allocation form for firms $A$ and $B$ occurs in points $\alpha$ and $\beta$, respectively. For instance in $\alpha$, firm $A$ enjoys paying the lowest stable share of the total ISR-related operational cost while $B$ is paying the highest. In this case, $B$ suffers because of this intuitively unfair allocation and pays equal to its total traditional cost. This shows that the concept of core provides  a method to verify the stability of an ISR and can be applied as a tool to support the \emph{collaboration decision} in ISRs, i.e., accepting an out of core cost-distribution results in an \emph{unstable}  ISR. Nevertheless,  it does not grasp the fairness and neither provides a method for choosing a specific cost allocation to implement among the set of stable cost allocations. In the following, we tailor a solution concept that axiomatize the concept of fairness and provides a  single-point allocation that satisfies the fairness property. 

\subsection{Shapley Allocation for ISR Games}

Regarding the allocation of costs in collaborative groups, there exist various interpretations of the complex notion of fairness (see \cite{rosenbusch2011fairness}). In cooperative game theory, the well-established concept of \emph{Shapley value} \cite{shapley1953value} is a central concept that regards the \emph{fairness} of a cost distribution among members of a collaborative agent group by taking into account their contributions  to the collaborative group. In this work, we follow Shapley's view and expect a \emph{fair} allocation to satisfy the following  four properties: \emph{efficiency}, \emph{symmetry}, \emph{dummy player}, and \emph{additivity} (as discussed in Section \ref{sec:GTanalysis}). In brief, if a group $G$ follows an \emph{efficient} method to allocate cost $C$ among its members, the summation of allocated costs to members of group $G$ will be equal to $C$. The \emph{symmetry} property says that agents that make the same contribution to the total cost, should be allocated the same individual cost shares. The \emph{dummy player} property says that if the presence of an agent $A$ does not result in any cost reduction (in all the possible agent groups), the allocated cost to $A$ should be equal to its individual total traditional cost. Finally, the \emph{additivity} property says that if you combine two games $V$ and $U$, the allocated cost to an agent $A$ (involved in both the games) should be the sum of the costs allocated to $A$ in the individual games, i.e., playing more than once does not lead to any (dis)advantages for $A$. For formal axiomatization of these properties, we refer the reader to \cite{shapley1953value}. In the following, we present our tailored Shapley value for ISR games.

\begin{definition} [Shapley Allocation for ISRs] \label{def:Shapley} Let $\sigma$ be an ISR game (as defined in Definition \ref{def:ISR-Games}) between firms $A$ and $B$. The Shapley allocation for $\sigma$ is the tuple $ \Phi (\sigma) :=   \langle  T_A^\Phi(\sigma), T_B^\Phi(\sigma)   \rangle $ where for $i \in N=\{A,B\}$ we have $T_i^\Phi(\sigma) = \frac{1}{2} [T(\sigma)+T_i(\bar{\sigma})-T_{N \setminus \{i\}}(\bar{\sigma})].$ \end{definition}

A reader familiar with the notion of Shapley value might expect the two notions of \emph{orders} and \emph{marginal contributions} to be a part of our tailored concept of Shapley value for ISRs. We highlight that due to our domain of application, i.e, bilateral industrial relations, there are two possible orders in ISR games (reflected by the constant value $\frac{1}{2}$). Moreover, the marginal contribution of a given firm $i \in N=\{A,B\}$ can be reformulated in terms of the most desirable stable cost for $i$ and the most undesirable one, i.e., $U_i(\sigma)=T(\sigma)-T_{N \setminus \{i\}}(\bar{\sigma})$ and $T_i(\bar{\sigma})$, respectively. Note that as the Shapley value is defined following a constructive method (in contrast to condition-based definition of core in Definition \ref{def:core}), the existence of the  Shapley value for any arbitrary ISR  game is guaranteed. Following our discussion about the fairness of collaborations, the next property shows that the Shapley value is the unique fair method for allocation of the total ISR-related operational cost in ISR games.

\begin{proposition} [Uniqueness of the Shapley Value] \label{prop:uni} Let $\Phi (\sigma)$ be the Shapley allocation for ISR game  $\sigma$ between firms $A$ and $B$. For any fair allocation of costs in $\sigma$, denoted by $\Phi'(\sigma)$, we have that $\Phi' (\sigma) = \Phi(\sigma)$.\end{proposition}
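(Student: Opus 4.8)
The plan is to reduce the statement to the classical uniqueness theorem for the Shapley value and then carry out the short two-player specialisation. First I would note that existence is not at issue here: Definition~\ref{def:Shapley} is constructive, so $\Phi(\sigma) = \langle T_A^\Phi(\sigma), T_B^\Phi(\sigma)\rangle$ is a well-defined allocation for every ISR game, and a routine check shows it satisfies efficiency ($EFF$), symmetry ($SYM$), the dummy-player property ($DUM$), and additivity ($ADD$). Hence the content of the proposition is precisely that \emph{no other} allocation rule can satisfy all four axioms on ISR games, i.e. $\Phi$ is singled out uniquely.

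For uniqueness I would invoke Shapley's theorem in the standard form: on the vector space of all TU games over a fixed player set $N$, the only value obeying $EFF$, $SYM$, $DUM$, and $ADD$ is the Shapley value $\phi$, given for $i \in N$ by $\phi_i(c) = \sum_{S \subseteq N \setminus \{i\}} \frac{|S|!\,(|N| - |S| - 1)!}{|N|!}\bigl(c(S \cup \{i\}) - c(S)\bigr)$. Specialising to $N = \{A,B\}$, the sum over $S$ reduces to the two terms $S = \emptyset$ and $S = \{j\}$ with $j \neq i$, each carrying weight $\tfrac{1}{2}$, so $\phi_i(c) = \tfrac{1}{2}\bigl(c(\{i\}) + c(N) - c(\{j\})\bigr)$. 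Substituting $c(N) = T(\sigma)$ and $c(\{i\}) = T_i(\bar{\sigma})$ (Definition~\ref{def:ISR-Games}) yields exactly $\tfrac{1}{2}[T(\sigma) + T_i(\bar{\sigma}) - T_{N \setminus \{i\}}(\bar{\sigma})] = T_i^\Phi(\sigma)$, i.e. $\phi = \Phi$ on ISR games. Since every ISR game is in particular a TU game, any fair allocation $\Phi'(\sigma)$ is the restriction to $\sigma$ of a value satisfying the four axioms, hence must agree with $\phi$, and therefore with $\Phi$; this gives $\Phi'(\sigma) = \Phi(\sigma)$.

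If a self-contained argument is preferred, I would instead decompose the characteristic function directly: write $c = T_A(\bar{\sigma})\,u_{\{A\}} + T_B(\bar{\sigma})\,u_{\{B\}} + \mu\,u_{\{A,B\}}$, where $u_S$ denotes the unanimity game of coalition $S$ (so $u_S(R) = 1$ if $S \subseteq R$ and $0$ otherwise) and $\mu = T(\sigma) - T_A(\bar{\sigma}) - T_B(\bar{\sigma}) \leq 0$ by subadditivity. On $\lambda\,u_{\{i\}}$ the partner $j$ is a dummy with standalone cost $0$, so $DUM$ assigns it $0$ and $EFF$ assigns $i$ the value $\lambda$; on $\mu\,u_{\{A,B\}}$ the firms $A$ and $B$ are symmetric, so $SYM$ and $EFF$ split $\mu$ into $(\mu/2, \mu/2)$. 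Additivity then forces the share of $A$ to equal $T_A(\bar{\sigma}) + \mu/2 = \tfrac{1}{2}[T(\sigma) + T_A(\bar{\sigma}) - T_B(\bar{\sigma})]$, and symmetrically for $B$, so any fair $\Phi'$ equals $\Phi$.

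The step requiring care — and the only real obstacle — is the scope of the additivity axiom: the auxiliary game $u_{\{A,B\}}$ is not subadditive (indeed $u_{\{A,B\}}(\{A\}) + u_{\{A,B\}}(\{B\}) = 0 < 1 = u_{\{A,B\}}(N)$), and for $\mu < 0$ the game $\mu\,u_{\{A,B\}}$ is not even non-negative valued, so these building blocks lie outside the ISR class. I would resolve this by reading ``fair allocation'' as a value defined on at least the linear span of ISR games — equivalently, on all TU games over $\{A,B\}$ — exactly as in the classical axiomatisation of the Shapley value; under that reading the decomposition is legitimate and the proof closes, with all remaining manipulations being the elementary arithmetic already displayed above.
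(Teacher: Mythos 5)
Your proof is correct, and its first route---reducing to Shapley's classical uniqueness theorem---is the same one the paper takes: the paper's own proof is essentially a one-line citation of \cite{shapley1953value} plus the remark that ISR games are two-person cost-allocation games. What you add beyond that is worth noting. First, you actually verify that the formula of Definition~\ref{def:Shapley} coincides with the classical Shapley value specialised to $N=\{A,B\}$; the paper never performs this check, and without it the citation only establishes that \emph{some} unique fair allocation rule exists, not that it is the $\Phi$ the paper defined. Second, your unanimity-game decomposition supplies a self-contained two-player argument (dummy player pins down the share on $\lambda\,u_{\{i\}}$, symmetry splits $\mu\,u_{\{A,B\}}$, additivity assembles the pieces) that the paper does not attempt. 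Third, and most substantively, you put your finger on a point the paper's proof glosses over: the classical uniqueness theorem concerns values on the whole linear space of TU games over $\{A,B\}$, and the additivity axiom loses its force if one quantifies only over the ISR subclass, which is not closed under the unanimity decomposition (the game $\mu\,u_{\{A,B\}}$ with $\mu<0$ is neither non-negative valued nor subadditive in the required direction). Your resolution---reading ``fair allocation'' as the restriction to $\sigma$ of a value satisfying the four axioms on all TU games over $\{A,B\}$---is the standard way to make the proposition literally true, and it makes explicit an assumption the paper relies on silently.
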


\begin{proof} Importing results from \cite{shapley1953value}, we have that for any cooperative game, the Shapley value is the unique allocation method that satisfies all the four properties of fair cost allocations, i.e., efficiency, symmetry, dummy player, and additivity, regardless of the characteristics of the cost function of the game. Accordingly, the uniqueness property holds for ISR games as two-person cost allocation games. \end{proof}

Considering core of ISR games and their unique Shapley allocation, the following proposition relates these two forms of solution concepts and shows that in ISR games the Shapley allocation is in the core. 

\begin{proposition} [Membership in the Core] \label{prop:membership} Let $\Psi (\sigma)$ and $\Phi (\sigma)$ be respectively the set of core allocations, and the Shapley allocation for ISR game $\sigma$ between firms $A$ and $B$. It always holds that $\Phi (\sigma) \in \Psi (\sigma)$.\end{proposition}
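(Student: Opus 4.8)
The plan is to check, one at a time, that the two-component vector $\Phi(\sigma)=\langle T_A^\Phi(\sigma),T_B^\Phi(\sigma)\rangle$ produced by Definition~\ref{def:Shapley} satisfies the three conditions that Definition~\ref{def:core} imposes on members of $\Psi(\sigma)$: \emph{non-negativity}, \emph{efficiency}, and \emph{individual rationality}. Since $N=\{A,B\}$ has only two elements, each of these collapses to an elementary inequality among the three numbers $T(\sigma)$, $T_A(\bar{\sigma})$, $T_B(\bar{\sigma})$, so the whole argument should reduce to a handful of inequality manipulations and an appeal to the subadditivity recorded after Definition~\ref{def:ISR-Games}.

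First I would dispose of \emph{efficiency}: writing out $T_A^\Phi(\sigma)=\frac{1}{2}[T(\sigma)+T_A(\bar{\sigma})-T_B(\bar{\sigma})]$ and $T_B^\Phi(\sigma)=\frac{1}{2}[T(\sigma)+T_B(\bar{\sigma})-T_A(\bar{\sigma})]$ and adding them, the traditional-cost terms cancel in pairs and the sum is exactly $T(\sigma)$. Next, \emph{individual rationality} for firm $A$: the claim $T_A^\Phi(\sigma)\le T_A(\bar{\sigma})$ becomes, after multiplying by $2$ and cancelling one copy of $T_A(\bar{\sigma})$, the inequality $T(\sigma)-T_B(\bar{\sigma})\le T_A(\bar{\sigma})$, i.e. $T(\sigma)\le T_A(\bar{\sigma})+T_B(\bar{\sigma})$ — which is precisely the subadditivity of ISR games noted after Definition~\ref{def:ISR-Games}. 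Interchanging the roles of $A$ and $B$ gives $T_B^\Phi(\sigma)\le T_B(\bar{\sigma})$ by the identical one-line computation.

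The step I expect to need the most care is \emph{non-negativity}, which — unlike the other two — does not drop out of subadditivity alone. Here I would rewrite the Shapley share as $T_i^\Phi(\sigma)=\frac{1}{2}\big(U_i(\sigma)+T_i(\bar{\sigma})\big)$ with $U_i(\sigma)=T(\sigma)-T_{N\setminus\{i\}}(\bar{\sigma})$, exhibiting $T_i^\Phi(\sigma)$ as the midpoint, along firm $i$'s axis, of the two endpoints of the core segment drawn in Figure~\ref{fig:M1}. Since $T_i(\bar{\sigma})\ge 0$ always holds, it then suffices to know $U_i(\sigma)\ge 0$ for $i\in\{A,B\}$, i.e. $T(\sigma)\ge\max\{T_A(\bar{\sigma}),T_B(\bar{\sigma})\}$ — the mild operational condition tacitly used in placing $\alpha$ and $\beta$ in Figure~\ref{fig:M1}, and economically natural (were it to fail, a single firm could absorb the entire operational cost and still save). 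Granting it, $T_i^\Phi(\sigma)\ge\frac{1}{2}T_i(\bar{\sigma})\ge 0$ for both firms, and combining this with the efficiency and individual-rationality checks above yields $\Phi(\sigma)\in\Psi(\sigma)$. Thus the only real obstacle is pinning down the non-negativity condition, and it amounts precisely to verifying $U_A(\sigma)\ge 0$ and $U_B(\sigma)\ge 0$.
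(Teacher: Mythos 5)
Your argument is correct where it counts, but it takes a genuinely different route from the paper. The paper's proof is a one-line appeal to the general fact (citing Shapley and Young) that submodular games have a nonempty core containing the Shapley value, combined with Proposition~\ref{prop:sub2sub}; it never unpacks Definition~\ref{def:core}. You instead verify the three defining conditions of $\Psi(\sigma)$ directly from the closed form in Definition~\ref{def:Shapley}. Your efficiency computation (the traditional-cost terms cancel) and your reduction of individual rationality to $T(\sigma)\le T_A(\bar{\sigma})+T_B(\bar{\sigma})$, i.e.\ to the subadditivity assumed in Definition~\ref{def:ISR-Games}, are both exactly right and make the two-player case transparent in a way the black-box citation does not. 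What the paper's approach buys is brevity and a connection to Proposition~\ref{prop:core_nonempty}; what yours buys is a self-contained, checkable argument that exposes precisely which hypothesis each condition rests on.

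That exposure matters for the third condition, and you are right to flag non-negativity as the delicate step: it does \emph{not} follow from subadditivity. Taking $T_A(\bar{\sigma})=1$, $T_B(\bar{\sigma})=10$, $T(\sigma)=5$ gives a subadditive ISR game with $T_A^\Phi(\sigma)=\frac{1}{2}(5+1-10)<0$, so condition (1) of Definition~\ref{def:core} fails and, as stated, $\Phi(\sigma)\notin\Psi(\sigma)$. Your fix — assuming $U_i(\sigma)=T(\sigma)-T_{N\setminus\{i\}}(\bar{\sigma})\ge 0$ for both $i$ — is the right additional hypothesis, and it is indeed tacit in Figure~\ref{fig:M1}. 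Be aware, though, that this is an assumption the paper never states in Definition~\ref{def:ISR-Games}, and the paper's own proof does not escape the problem: the cited classical result places the Shapley value in the \emph{standard} core (efficiency plus rationality), which does not include the non-negativity clause the paper added to its core definition. So your proof is complete modulo an explicitly named extra condition, whereas the paper's proof silently needs the same condition (or needs to drop clause (1) from Definition~\ref{def:core}). Stating that condition as part of the proposition's hypotheses would make your argument fully rigorous.
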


\begin{proof} Based on \cite{shapley1953value,young1985cost}, the core of submodular games is nonempty and includes the Shapley value. For ISR games, according to Proposition \ref{prop:core_nonempty}, the core is nonempty. Thus, we have that for any ISR game $\sigma$, it holds that $\Phi (\sigma) \in \Psi (\sigma)$. \end{proof}

Note that the membership of the Shapley allocation in the core  is a property of ISR games and not a general property of the Shapley cost allocation for any class of cooperative games. Accordingly, we have that the Shapley allocation of any ISR game can be illustrated in two dimensional space. More precisely, the Shapley allocation $\gamma$ (see Figure \ref{fig:M1}) is the midpoint of the core allocation segment.

\begin{example}[Allocations in the ISR Scenario] \label{ex:002} Considering the presented scenario in Example \ref{ex:001}, any cost allocation  $\langle  T_A, T_B  \rangle$ such that $4 \leq T_A \leq 7$ and    $8 \leq T_B \leq 11$ and $T_A + T_B =15$ is a core allocation. Moreover, $\langle 5.5 , 9.5  \rangle$ is the Shapley allocation.
\end{example}

In general, Shapley allocation does not satisfy the \emph{individual rationality} property. I.e., it might be the case that some agents in a collaborative group should pay higher than their traditional cost in order to guarantee the fairness property. In such cases, we have a fair albeit unstable collaboration because any sacrificing firm has incentive to rationally defect the collaboration. However, the next proposition shows that for ISR games, the Shapley allocation is individually rational. 

\begin{proposition} [Fairness and Stability] \label{prop:raitonalToo}
Let $\Phi (\sigma)$ be the Shapley allocation for ISR game $\sigma$ between firms $A$ and $B$. It always holds that for $i \in \{A,B\}$, we have that $T_i^\Phi (\sigma) \leq T_i(\bar{\sigma})$.\end{proposition}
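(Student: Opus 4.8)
The plan is to unfold the closed-form expression for the Shapley allocation from Definition \ref{def:Shapley} and reduce the desired inequality to the subadditivity condition that is built into the definition of ISR games. Fix $i \in \{A,B\}$ and write $j$ for the other firm, so that $N \setminus \{i\} = \{j\}$. Substituting $T_i^\Phi(\sigma) = \frac{1}{2}[T(\sigma) + T_i(\bar{\sigma}) - T_j(\bar{\sigma})]$ into the claim $T_i^\Phi(\sigma) \leq T_i(\bar{\sigma})$ and multiplying both sides by $2$ (which preserves the inequality, since $\tfrac{1}{2} > 0$), the claim becomes $T(\sigma) + T_i(\bar{\sigma}) - T_j(\bar{\sigma}) \leq 2\,T_i(\bar{\sigma})$, i.e., after cancelling one copy of $T_i(\bar{\sigma})$ and rearranging, $T(\sigma) \leq T_i(\bar{\sigma}) + T_j(\bar{\sigma}) = T_A(\bar{\sigma}) + T_B(\bar{\sigma})$.

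This last inequality is exactly the standing subadditivity assumption on ISR games recorded after Definition \ref{def:ISR-Games}, namely $T(\sigma) \leq \sum_{k \in N} T_k(\bar{\sigma})$, which encodes the feasibility of the relation $\sigma$. Since the two firms play symmetric roles in this derivation, the same chain of equivalences settles the claim for both $i = A$ and $i = B$, and the proof is complete.

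An alternative route, perhaps more in keeping with the flow of the section, is to invoke Proposition \ref{prop:membership} directly: since $\Phi(\sigma) \in \Psi(\sigma)$, the Shapley allocation is in particular a core allocation, and condition (3) of Definition \ref{def:core} is precisely the individual-rationality inequality $T_i^\Phi(\sigma) \leq T_i(\bar{\sigma})$. I would present the short self-contained computation as the main argument and note the core-membership derivation as a remark, because the computation makes transparent exactly which hypothesis — subadditivity — is doing the work.

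There is no genuine obstacle here; the statement is essentially a one-line algebraic rearrangement. The only points requiring a little care are the bookkeeping of the term $T_{N \setminus \{i\}}(\bar{\sigma})$ (recognizing that in a two-player game it is just $T_j(\bar{\sigma})$ for the unique remaining firm $j$) and keeping track of the inequality direction when clearing the factor $\tfrac{1}{2}$; everything then bottoms out at the subadditivity assumed in Definition \ref{def:ISR-Games}.
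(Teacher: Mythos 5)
Your proof is correct, but your primary argument is genuinely different from the paper's. The paper proves this proposition purely by citation chaining: it invokes Proposition~\ref{prop:membership} (the Shapley allocation lies in the core, which in turn rests on the imported fact that submodular games have a core containing the Shapley value) and then reads off individual rationality from condition (3) of Definition~\ref{def:core} --- i.e., exactly the ``alternative route'' you mention as a remark. Your main argument instead unfolds the closed-form expression $T_i^\Phi(\sigma) = \frac{1}{2}[T(\sigma)+T_i(\bar{\sigma})-T_j(\bar{\sigma})]$ and reduces the claim directly to the subadditivity inequality $T(\sigma) \leq T_A(\bar{\sigma}) + T_B(\bar{\sigma})$ that is built into Definition~\ref{def:ISR-Games}. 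The algebra checks out, and your approach has the advantage of being self-contained and of isolating precisely which hypothesis does the work (feasibility/subadditivity), whereas the paper's route depends on external results about submodular games and on the correctness of two earlier propositions. What the paper's approach buys is brevity and a conceptual link between fairness and stability (Shapley-in-core); what yours buys is transparency and independence from the cited machinery. Your instinct to present the computation as the main argument and the core-membership derivation as a remark is a reasonable editorial choice, just the opposite of the one the authors made.
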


\begin{proof} According to Proposition  \ref{prop:membership}, the Shapley allocation of ISR game $\sigma$ is also a core allocation. Hence, it also satisfies the \emph{individual rationality} condition in Definition   \ref{def:core}. \end{proof}

Based on proposition \ref{prop:raitonalToo}, in case the two firms agree to implement the Shapley allocation, it is guaranteed that the relation will be both fair and stable. Although implementing the Shapley allocation seems natural due to its desirable  properties, firms may prefer to negotiate among cost allocations in the core with the aim to practice their bargaining power and enjoy more cost reduction. However, industrial agents that suffer from unfair allocations in such cases may defect the collaboration, leave/reject the  ISR, and join other ISRs that are practicing fair allocation methods. 

\section{Conclusions and Future Work} \label{sec:conc}

In this work, we present a game-theoretical representation of Industrial Symbiotic Relations (ISRs) and tailor two types of solution concepts for cost allocation in such relations, i.e., \emph{core allocation} and \emph{Shapley allocation} for ISR games. These two notions can also be seen as two approaches for decision support while firms are faced with the collaboration decision, to reject or accept an ISR proposal. This is by enabling firms to systematically reason about and verify \emph{stability} and \emph{fairness} of a particular ISR. Range of stable collaborations, provided by the concept of \emph{core}, allows further negotiation while the \emph{Shapley} allocation leads to a uniquely fair solution. We then show that due to the characteristics of industrial symbiotic relations, ISR games can  always be operationalized in both a fair and stable manner. In addition to practical contributions by providing managerial decision support tools, we introduced ISR games as a new class of two-person Operations Research (OR) games. In ISR games, we have the non-emptiness of the core and it is guaranteed that the Shapley value in this class of OR games is an individually rational solution. As a future work, we aim to analyze the validity of presented results using  multiagent-based simulations \cite{dignum2009agents}. We also plan to extend our game-theoretical analysis to network level, relate our notions to the concept of \emph{willingness to cooperate} \cite{yazan2012appendix}, and study Industrial Symbiotic Networks (ISNs). Due to complexities of ISNs (see \cite{yazdanpanah2016}) guaranteeing \emph{fairness} and \emph{stability} in such networks calls for mechanisms to coordinate agent interactions \cite{dastani2015commitments} and governance platforms for, as discussed in  \cite{singh2013norms}, ``administration of stakeholders by stakeholders".

\section*{Acknowledgment}
The project leading to this work  has received funding from the \emph{European Union's Horizon
2020 research and innovation programme} under grant agreement \emph{No. 680843}.

\bibliographystyle{IEEEtran}
\bibliography{Main}

\begin{thebibliography}{10}
\providecommand{\url}[1]{#1}
\csname url@samestyle\endcsname
\providecommand{\newblock}{\relax}
\providecommand{\bibinfo}[2]{#2}
\providecommand{\BIBentrySTDinterwordspacing}{\spaceskip=0pt\relax}
\providecommand{\BIBentryALTinterwordstretchfactor}{4}
\providecommand{\BIBentryALTinterwordspacing}{\spaceskip=\fontdimen2\font plus
\BIBentryALTinterwordstretchfactor\fontdimen3\font minus
  \fontdimen4\font\relax}
\providecommand{\BIBforeignlanguage}[2]{{%
\expandafter\ifx\csname l@#1\endcsname\relax
\typeout{** WARNING: IEEEtran.bst: No hyphenation pattern has been}%
\typeout{** loaded for the language `#1'. Using the pattern for}%
\typeout{** the default language instead.}%
\else
\language=\csname l@#1\endcsname
\fi
#2}}
\providecommand{\BIBdecl}{\relax}
\BIBdecl

\bibitem{chertow2000industrial}
M.~R. Chertow, ``Industrial symbiosis: literature and taxonomy,'' \emph{Annual
  review of energy and the environment}, vol.~25, no.~1, pp. 313--337, 2000.

\bibitem{lombardi2012redefining}
D.~R. Lombardi and P.~Laybourn, ``Redefining industrial symbiosis,''
  \emph{Journal of Industrial Ecology}, vol.~16, no.~1, pp. 28--37, 2012.

\bibitem{yazan2016design}
D.~M. Yazan, V.~A. Romano, and V.~Albino, ``The design of industrial symbiosis:
  an input--output approach,'' \emph{Journal of cleaner production}, vol. 129,
  pp. 537--547, 2016.

\bibitem{pearce1990economics}
D.~W. Pearce and R.~K. Turner, \emph{Economics of natural resources and the
  environment}.\hskip 1em plus 0.5em minus 0.4em\relax JHU Press, 1990.

\bibitem{stock2016opportunities}
T.~Stock and G.~Seliger, ``Opportunities of sustainable manufacturing in
  industry 4.0,'' \emph{Procedia CIRP}, vol.~40, pp. 536--541, 2016.

\bibitem{camarinha2009collaborative}
L.~M. Camarinha-Matos, H.~Afsarmanesh, N.~Galeano, and A.~Molina,
  ``Collaborative networked organizations - concepts and practice in
  manufacturing enterprises,'' \emph{Computers \& Industrial Engineering},
  vol.~57, no.~1, pp. 46--60, 2009.

\bibitem{fraccascia2017technical}
L.~Fraccascia, V.~Albino, and C.~A. Garavelli, ``Technical efficiency measures
  of industrial symbiosis networks using enterprise input-output analysis,''
  \emph{International Journal of Production Economics}, vol. 183, pp. 273--286,
  2017.

\bibitem{jackson1998s}
T.~Jackson and R.~Clift, ``Where's the profit in lndustrial ecology?''
  \emph{Journal of Industrial Ecology}, vol.~2, no.~1, pp. 3--5, 1998.

\bibitem{yu2015makes}
C.~Yu, G.~P. Dijkema, and M.~Jong, ``What makes eco-transformation of
  industrial parks take off in china?'' \emph{Journal of Industrial Ecology},
  vol.~19, no.~3, pp. 441--456, 2015.

\bibitem{driessen2013cooperative}
T.~S. Driessen, \emph{Cooperative games, solutions and applications}.\hskip 1em
  plus 0.5em minus 0.4em\relax Springer Science \& Business Media, 2013,
  vol.~3.

\bibitem{borm2001operations}
P.~Borm, H.~Hamers, and R.~Hendrickx, ``Operations research games: A survey,''
  \emph{Top}, vol.~9, no.~2, p. 139, 2001.

\bibitem{shapley1955markets}
L.~S. Shapley, ``Markets as cooperative games,'' DTIC Document, Tech. Rep.,
  1955.

\bibitem{young1985cost}
H.~P. Young, \emph{Cost allocation: methods, principles, applications}.\hskip
  1em plus 0.5em minus 0.4em\relax North Holland Publishing Co., 1985.

\bibitem{yazdanpanah2016}
V.~Yazdanpanah, D.~M. Yazan, and W.~H.~M. Zijm, ``Normative industrial
  symbiotic networks: A position paper,'' in \emph{Multi-Agent Systems and
  Agreement Technologies}.\hskip 1em plus 0.5em minus 0.4em\relax Springer,
  2016, pp. 314--321.

\bibitem{bengtsson2000coopetition}
M.~Bengtsson and S.~Kock, ``Coopetition in business networks: to cooperate and
  compete simultaneously,'' \emph{Industrial marketing management}, vol.~29,
  no.~5, pp. 411--426, 2000.

\bibitem{albino2016exploring}
V.~Albino, L.~Fraccascia, and I.~Giannoccaro, ``Exploring the role of contracts
  to support the emergence of self-organized industrial symbiosis networks: an
  agent-based simulation study,'' \emph{Journal of Cleaner Production}, vol.
  112, pp. 4353--4366, 2016.

\bibitem{esty1998industrial}
D.~C. Esty and M.~E. Porter, ``Industrial ecology and competitiveness,''
  \emph{Journal of Industrial Ecology}, vol.~2, no.~1, pp. 35--43, 1998.

\bibitem{sinding2000environmental}
K.~Sinding, ``Environmental management beyond the boundaries of the firm:
  definitions and constraints,'' \emph{Business strategy and the environment},
  vol.~9, no.~2, p.~79, 2000.

\bibitem{magram2011worldwide}
S.~F. Magram, ``Worldwide solid waste recycling strategies: A review,''
  \emph{Indian Journal of Science and Technology}, vol.~4, no.~6, pp. 692--702,
  2011.

\bibitem{lovelady2009design}
E.~M. Lovelady and M.~M. El-Halwagi, ``Design and integration of eco-industrial
  parks for managing water resources,'' \emph{Environmental progress \&
  sustainable energy}, vol.~28, no.~2, pp. 265--272, 2009.

\bibitem{zijm2015logistics}
H.~Zijm, M.~Klumpp, U.~Clausen, and M.~Ten~Hompel, \emph{Logistics and Supply
  Chain Innovation: Bridging the Gap between Theory and Practice}.\hskip 1em
  plus 0.5em minus 0.4em\relax Springer, 2015.

\bibitem{carpenter2008use}
A.~C. Carpenter and K.~H. Gardner, ``Use of industrial by-products in urban
  transportation infrastructure: argument for increased industrial symbiosis,''
  in \emph{Electronics and the Environment, 2008. ISEE 2008. IEEE International
  Symposium on}.\hskip 1em plus 0.5em minus 0.4em\relax IEEE, 2008, pp. 1--7.

\bibitem{chen2012impact}
X.~Chen, T.~Fujita, S.~Ohnishi, M.~Fujii, and Y.~Geng, ``The impact of scale,
  recycling boundary, and type of waste on symbiosis and recycling,''
  \emph{Journal of Industrial Ecology}, vol.~16, no.~1, pp. 129--141, 2012.

\bibitem{dahlman1979problem}
C.~J. Dahlman, ``The problem of externality,'' \emph{The journal of law and
  economics}, vol.~22, no.~1, pp. 141--162, 1979.

\bibitem{williamson1981economics}
O.~E. Williamson, ``The economics of organization: The transaction cost
  approach,'' \emph{American journal of sociology}, pp. 548--577, 1981.

\bibitem{shapley1953value}
L.~S. Shapley, ``A value for n-person games,'' \emph{Contributions to the
  Theory of Games}, vol.~2, no.~28, pp. 307--317, 1953.

\bibitem{sandholm1999distributed}
T.~W. Sandholm, ``Distributed rational decision making,'' \emph{Multiagent
  systems: a modern approach to distributed artificial intelligence}, pp.
  201--258, 1999.

\bibitem{dantzig2016linear}
G.~Dantzig, \emph{Linear programming and extensions}.\hskip 1em plus 0.5em
  minus 0.4em\relax Princeton university press, 2016.

\bibitem{coff1999competitive}
R.~W. Coff, ``When competitive advantage doesn't lead to performance: The
  resource-based view and stakeholder bargaining power,'' \emph{Organization
  science}, vol.~10, no.~2, pp. 119--133, 1999.

\bibitem{rosenbusch2011fairness}
A.~P. Rosenbusch, ``Fairness considerations in cooperative games,'' Ph.D.
  dissertation, Technische Universit{\"a}t, 2011.

\bibitem{dignum2009agents}
F.~Dignum, J.~Bradshaw, B.~G. Silverman, and W.~van Doesburg, \emph{Agents for
  games and simulations: Trends in techniques, concepts and design}.\hskip 1em
  plus 0.5em minus 0.4em\relax Springer, 2009, vol. 5920.

\bibitem{yazan2012appendix}
D.~Yazan, J.~Clancy, J.~Lovett \emph{et~al.}, ``Appendix: supply chains,
  techno-economic assessment and market development for second generation
  biodiesel.'' \emph{Advances in biodiesel production: processes and
  technologies}, pp. 254--280, 2012.

\bibitem{dastani2015commitments}
M.~Dastani, L.~van~der Torre, and N.~Yorke-Smith, ``Commitments and interaction
  norms in organisations,'' \emph{Autonomous Agents and Multi-Agent Systems},
  pp. 1--43, 2015.

\bibitem{singh2013norms}
M.~P. Singh, ``Norms as a basis for governing sociotechnical systems,''
  \emph{ACM Transactions on Intelligent Systems and Technology (TIST)}, vol.~5,
  no.~1, p.~21, 2013.

\end{thebibliography}
\end{document}